\newtheorem{theorem}{Theorem}
\newtheorem{lemma}{Lemma}
\newtheorem{proposition}{Proposition}
\newtheorem{property}{Property}
\theoremstyle{definition}
\newtheorem{definition}{Definition}
\theoremstyle{remark}
\newtheorem{remark}{Remark}
\newcommand\E{\mathbb{E}}
\renewcommand\P{\mathbb{P}}
\newcommand{\T}{\mathbf{T}}
\newcommand{\X}{\mathbf{X}}
\newcommand{\Y}{\mathbf{Y}}
\newcommand{\N}{\mathbf{N}}
\tikzstyle{block} = [draw, fill=white, rectangle, 
\tikzstyle{input} = [coordinate]
\tikzstyle{output} = [coordinate]
\begin{document}
\title{On Conditional $\alpha$-Information\\and its Application to Side-Channel Analysis} 

\author{%
	\IEEEauthorblockN{Yi Liu\IEEEauthorrefmark{1}, Wei Cheng\IEEEauthorrefmark{1}, Sylvain Guilley\IEEEauthorrefmark{2}\IEEEauthorrefmark{1}, and Olivier Rioul\IEEEauthorrefmark{1}}\\[-2mm]
	\IEEEauthorblockA{\IEEEauthorrefmark{1}%
		LTCI, T\'el\'ecom Paris,
		Polytechnique de Paris,
		91120, Palaiseau, France,
		firstname.lastname@telecom-paris.fr}
	\IEEEauthorblockA{\IEEEauthorrefmark{2}%
		Secure-IC S.A.S.,
		75015, Paris, France,
		sylvain.guilley@secure-ic.com}
}

\maketitle

\noindent\textmd{\small THIS PAPER IS ELIGIBLE FOR THE STUDENT PAPER AWARD}
\medskip

\begin{abstract}
A conditional version of Sibson's $\alpha$-information is defined using a simple closed-form ``log-expectation'' expression, which satisfies important properties such as consistency, uniform expansion, and data processing inequalities. This definition is compared to previous ones, which in contrast do not satisfy all of these properties.
Based on our proposal and on a generalized Fano inequality, 
we extend the case $\alpha=1$ of previous works to obtain sharp universal upper bounds for the probability of success of any type side-channel attack, particularly when $\alpha=2$.
\end{abstract}



\section{Introduction}

Mutual information as a theoretical tool to analyse the capability of an attacker to perform side-channel analysis has been advocated since Standaert et al.~\cite{StandaertMalkinYung09}. 
The communication channel model for this problem was used in~\cite{HeuserRioulGuilley14} to optimize the side-channel attack distinguishers for any given type of leakage model.
Recently, Chérisey et al.~\cite{CheriseyGuilleyRioulPiantanida19c,CheriseyGuilleyRioulPiantanida19}  used such information-theoretic tools to establish some \emph{universal} inequalities between the probability of success of a side-channel attack and the minimum number of queries to reach a given success rate. Such inequalities are ``universal'' in the sense that they can apply to any type of attack and depend only on the leakage model.

In this paper, we aim at extending the approach of~\cite{CheriseyGuilleyRioulPiantanida19c,CheriseyGuilleyRioulPiantanida19} to R\'enyi information quantities depending on a parameter $\alpha$.
For that we need the following ingredients that were crucial in the derivation steps of\cite{CheriseyGuilleyRioulPiantanida19c,CheriseyGuilleyRioulPiantanida19}:
\begin{itemize}
\item a \emph{closed-form expression} of conditional mutual information,  amenable to efficient numerical estimation;
\item a \emph{data processing inequality} of conditional mutual information over a ``conditional'' Markov chain for a given plain or cypher text $T$ (known to the attacker);
\item a \emph{expansion property} of conditional mutual information, i.e., its decomposition as a difference between conditional entropies, valid at least when the secret is assumed \emph{uniformly distributed};
\item a \emph{Fano inequality} which yields a lower bound on mutual information that depends on the probability of success (or equivalently on the probability of error).
\end{itemize}
Our aim, therefore, is to establish all of these properties for a suitably defined conditional R\'enyi version of mutual information of order $\alpha>0$.

The rest of this paper is organized as follows. Section~\ref{sec-back} reviews some useful definitions and properties of R\'enyi informational quantities. Emphasis is made on consistency, uniform expansion and data processing inequalities. Section~\ref{sec-def} then proposes a natural definition of conditional $\alpha$-information satisfying the required properties and Section~\ref{sec-comp} makes a detailed comparison to previous proposals. Section~\ref{sec-sca} presents the main result applied to side-channel analysis, which is then validated using simulations.

\section{Background and Definitions}\label{sec-back}

\subsection{$\alpha$-Entropy and $\alpha$-Divergence}

Rényi entropy and divergence are well-known generalizations of Shannon's entropy and Kullback-Leibler divergence:
\begin{definition}
Assume that either $0<\alpha<1$ or $1<\alpha<+\infty$ (the limiting values $0,1,+\infty$ 
being obtained by taking limits).

The \emph{$\alpha$-entropy} of a probability distribution $P$ and \emph{$\alpha$-divergence} of $P$ from $Q$ are defined as
\begin{align}
H_\alpha(P)&= \tfrac{\alpha}{1-\alpha} \log\|p\|_\alpha\\
D_\alpha (P\|Q)&=  \tfrac{1}{\alpha-1} \log\langle p\|q\rangle^\alpha_\alpha
\end{align}
where we have used the special notation:
\begin{align}
\|p\|_\alpha &= \bigl(\int p^\alpha d\mu \bigr)^{1/\alpha}\\
\langle p\|q\rangle_\alpha &= \bigl(\smash{\int} p^\alpha q^{1-\alpha} d\mu \bigr)^{1/\alpha}
\end{align}
with the following convention: All considered probability distributions $P,Q$ possess a dominating measure $\mu$ such that $P \ll \mu$ and $Q \ll \mu$, the corresponding lower-case letters $p,q$ are densities of $P,Q$ with respect to $\mu$.
\end{definition}

\begin{remark}
When $\mu$ is a counting measure we obtain the classical definitions for discrete random variables; when $\mu$ is the Lebesgue measure we obtain the corresponding definitions for continuous variables. 
While it is easily seen that the definition of $\alpha$-divergence does not depend on the chosen dominating measure $\mu$, that of $\alpha$-entropy does. 
\end{remark}

A link between these two quantities is the following \emph{uniform expansion property} (UEP).
Let $U\sim\mathcal{U}(M)$ be uniformly distributed over a set of finite $\mu$-measure $M$. (In the discrete case $U$ simply takes $M$ equiprobable values.) Since $u\equiv \frac{1}{M}$ we have $\langle p\|u\rangle_\alpha=M^{\frac{\alpha-1}{\alpha}}\|p\|_\alpha$, hence
\begin{property}[UEP of $\alpha$-Divergence]
$D_\alpha(P\|U) =H_\alpha(U)-H_\alpha(P)=\log M - H_\alpha(P)$. 
\end{property}

Another important property is the \emph{data processing inequality} (DPI).
A random transformation given by a conditional distribution $P_{Y|X}$ is noted $P_X\, \to \boxed{P_{Y|X}}\, \to P_Y$ if a random variable $X\sim P_X$ is input and the output distribution $P_Y$ satisfies $p_Y(y)=\int p_{Y|X}(y|x)p_X(x)\,d \mu(x)$. Similarly for $Q_X\, \to \boxed{P_{Y|X}}\, \to Q_Y$ we have $q_Y(y)=\int_{\mathcal{X}}p(y|x)q_X(x)\,d\mu(x)$.
\begin{property}[DPI for $\alpha$-Divergence~\cite{PolyanskiyVerdu10,Rioul21}] 
Any transformation can only reduce $\alpha$-divergence: $D_\alpha(P_X\|Q_X)\geq D_\alpha(P_Y\|Q_Y)$.
\label{dpi_alpha}
\end{property}

\subsection{Conditional $\alpha$-Entropy and $\alpha$-Divergence}

Both definitions of $\alpha$-divergence and $\alpha$-entropy have been extended to \emph{conditional} versions, in a fairly natural way:
\begin{definition}\label{def-conddiv}
The conditional $\alpha$-divergence is defined as~\cite{Verdu15}  
\begin{equation}
D_\alpha(P_{Y|X}\|Q_{Y|X}|P_X)= D_\alpha(P_{Y|X}P_X\|Q_{Y|X}P_X)
\end{equation}
\end{definition}
This definition is consistent with the unconditional one:
\begin{property}[Consistency of Conditional $\alpha$-Divergence w.r.t. $\alpha$-Divergence]
If $X\equiv 0$ then $D_\alpha(P_{Y|X}\|Q_{Y|X}|P_X)=D_\alpha(P_Y\|Q_Y)$.
\end{property}
\noindent Here following Shannon~\cite{Shannon53}
we have noted $X\equiv 0$ for any random variable independent of everything else  considered (e.g., a constant variable).

In Definition~\ref{def-conddiv} we remark that the \emph{expectation} over the conditioned variable is only taken \emph{inside the logarithm} in the $\alpha$-divergence's expression:
\begin{equation}
D_\alpha(P_{Y|X}\|Q_{Y|X}|P_X)=\tfrac{1}{\alpha-1} \log \E_{X}  \langle p_{Y|X}\|q_{Y|X}\rangle^\alpha_\alpha
\end{equation}
A similar ``log-expectation'' definition holds for the following preferred form of the conditional $\alpha$-entropy (a.k.a. Arimoto's conditional entropy). Considering the expression $H_\alpha(X)=H_\alpha(P_X)= \frac{\alpha}{1-\alpha} \log\|p_X\|_\alpha$ and taking the expectation over a conditioned variable inside the logarithm yields the following
\begin{definition}
The conditional $\alpha$-entropy of $X$ given $Y$ is defined as~\cite{Arimoto75,FehrBerens14}   
\begin{equation}
H_\alpha(X|Y) = \frac{\alpha}{1-\alpha} \log \E_Y \|p_{X|Y}\|_\alpha
\end{equation}
\end{definition}

Among other variations of conditional $\alpha$-entropy~\cite{FehrBerens14} 
it is this definition that enjoys all three important properties: \emph{consistency, UEP and DPI}.
Consistency is obvious from the definition:
\begin{property}[Consistency of Conditional $\alpha$-Entropy w.r.t. $\alpha$-Entropy]
If $Y\equiv 0$ then $H_\alpha(X|Y)= H_\alpha(X)$.
\end{property}
As in the case of the $\alpha$-entropy, since $\langle p_{X|Y}\|u\rangle_\alpha=M^{\frac{\alpha-1}{\alpha}}\|p_{X|Y}\|_\alpha$, we have the following
\begin{property}[UEP]
If $U\sim\mathcal{U}(M)$ is uniform independent ot $X$,
$D_\alpha(P_{Y|X}\|U|P_X)=H_\alpha(U)-H_\alpha(Y|X)=\log M-H_\alpha(Y|X)$.
\end{property}
\begin{property}[DPI for Conditional $\alpha$-Entropy~\cite{FehrBerens14,Rioul21}]
If $X-Y-Z$ forms a Markov chain, then $H_\alpha(X|Y) \leq H_\alpha(X|Z)$.
\end{property}
\noindent
In particular for $Z\!\equiv\!0$, conditioning reduces $\alpha$-entropy: $H_\alpha(X|Y) \leq  H_\alpha(X|0) = H_\alpha(X)$.
More generally one has~\cite{FehrBerens14}
$H_\alpha(X|YY') \leq H_\alpha(X|Y')$.

\subsection{$\alpha$-Information}

Sibson's $\alpha$-information is perhaps the preferred generalization of Fano's classical mutual information and has found various applications~\cite{PolyanskiyVerdu10,Verdu15,TomamichelHayashi18,Rioul21,EspositoWuGastpar21,EspositoGastparIssa}: 
\begin{definition}\label{def-info}
The $\alpha$-information~\cite{Sibson69,Verdu15}
of $X$ from $Y$ is defined as
\begin{equation}\label{eq-sibson}
I_\alpha(X;Y)= \frac{\alpha}{\alpha-1} \log \E_Y \langle p_{X|Y}\|p_X\rangle_\alpha
\end{equation}
\end{definition}
\noindent This is again a ``log-expectation''  expression where one takes the expectation over $Y$ inside the logarithm in the expression of the divergence
\begin{equation*}
 D_\alpha(P_{X|Y=y}\|P_X)= \frac{\alpha}{\alpha-1} \log \langle p_{X|Y=y}\|p_X\rangle_\alpha
\end{equation*}
\begin{remark}
This construction  focuses on the distribution of $X$, conditioned on $Y$ or not. In contrast to the classical case $\alpha=1$, the resulting definition of information is not symmetric: $I_\alpha(X;Y)\ne I_\alpha(Y,X)$. Therefore, $\alpha$-information is no longer ``mutual'' when $\alpha\ne 1$.
\end{remark}
As in the case of the conditional $\alpha$-entropy, since $\langle p_{U|Y}\|u\rangle_\alpha=M^{\frac{\alpha-1}{\alpha}}\|p_{U|Y}\|_\alpha$, we have the following
\begin{property}[UEP for $\alpha$-Information~\cite{ErvenHarremoes14,Rioul21}] 
If $U\sim\mathcal{U}(M)$ is uniformly distributed, then $I_\alpha(U;Y) = H_\alpha(U) - H_\alpha(U|Y)=\log M - H_\alpha(U|Y)$.
\end{property}
\begin{property}[DPI for $\alpha$-Information~\cite{PolyanskiyVerdu10,Rioul21}]
If $W-X-Y-Z$ forms a Markov chain, then $I_\alpha(X;Y) \geq I_\alpha(W;Z)$.
\label{prop-DPI-info}
\end{property}
\begin{proof}[Proof (for completeness)]
Let $P_{X,Y}\!\to\!\boxed{P_{X,Z|X,Y}}\!\to\!P_{X,Z}\to\!\boxed{P_{W,Z|X,Z}}\!\to P_{W,Z}$. By the Markov condition, one has $P_{X,Z|X,Y}=P_{X|X} P_{Z|X,Y}=P_{X|X} P_{Z|Y}$ where $P_{X|X}$ is the identity operator; similarly $P_{W,Z|X,Z}=P_{W|X,Z}P_{Z|Z}=P_{W|X}P_{Z|Z}$. Thus if $Q_{Y} \to\!\boxed{P_{Z|Y}}\!\to Q_{Z}$, 
we find $P_{X}Q_{Y} \to\!\boxed{P_{X,Z|X,Y}}\!\to P_{X}Q_{Z}
\to\!\boxed{P_{W,Z|X,Z}}\!\to P_{W}Q_Z$.
Now by the data processing inequality for $\alpha$-divergence (Property~\ref{dpi_alpha}),
$D_\alpha(P_{X,Y}\|P_XQ_Y)\geq D_\alpha(P_{W,Z}\|P_WQ_Z)\geq I_\alpha(W;Z)$.
Minimizing over $Q_Y$ gives the announced DPI.
\end{proof}

\begin{remark}
Because of the non-symmetric nature of $\alpha$-information, the DPI corresponds to two separate statements of pre- and post-processing inequalities~\cite{PolyanskiyVerdu10}. 

We remark that the Lapidoth-Pfister \emph{mutual} information, which is symmetric, $J_{\alpha}(X;Y)=J_{\alpha}(Y;X)$ does also enjoy data processing inequalities but unfortunately does not seem to possess a closed-form expression~\cite{LapidothPfister16}.
\end{remark}

\subsection{Sibson's identity}

An important property of $\alpha$-information is \emph{Sibson's identity}.
It is straightforward to compute
\begin{align}
\langle p_{XY} \| p_X q_Y \rangle^\alpha_\alpha &=
\iint p^\alpha_{Y} p^\alpha_{X|Y} p_X^{1-\alpha} q_Y^{1-\alpha} \\
&= \bigl\langle p_Y \langle p_{X|Y}\|p_X\rangle_{\!\alpha} \,\|\, q_Y \bigr\rangle^\alpha_\alpha.
\end{align}
Defining the (suitably normalized) distribution  $q^*_Y=p_Y \langle p_{X|Y}\|p_X\rangle_{\!\alpha}  / \E_Y \langle p_{X|Y}\|p_X\rangle_{\!\alpha} $, substituting and taking the logarithm gives the following
\begin{proposition}[Sibson's identity~\cite{Sibson69,Verdu15}] 
One has
\begin{equation}\label{eq-sibson-identity}
D_\alpha(P_{XY}\|P_XQ_Y) = D(Q^*_Y\|Q_Y) + I_\alpha(X;Y), 
\end{equation}
hence the following alternate minimizing definition:
\begin{equation}\label{eq-info-min}
I_\alpha(X;Y)  =\min_{Q_Y}  D_\alpha(P_{XY}\|P_XQ_Y).
\end{equation}
\end{proposition}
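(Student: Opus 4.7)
The plan is to complete the computation already begun immediately above the statement. Writing $p_{XY}=p_Y p_{X|Y}$ and collecting powers, I would first confirm the displayed identity
\begin{equation*}
\langle p_{XY}\|p_X q_Y\rangle_\alpha^\alpha = \bigl\langle p_Y\langle p_{X|Y}\|p_X\rangle_\alpha \,\|\, q_Y\bigr\rangle_\alpha^\alpha,
\end{equation*}
by doing the $x$-integral first: $\int p_{X|Y}^\alpha p_X^{1-\alpha}\,d\mu_X = \langle p_{X|Y}\|p_X\rangle_\alpha^\alpha$ as a function of $y$, and then grouping the remaining $y$-integrand.

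Next, I would introduce the normalizing constant $Z = \E_Y\langle p_{X|Y}\|p_X\rangle_\alpha$ and set $q^*_Y = p_Y\langle p_{X|Y}\|p_X\rangle_\alpha / Z$. Because the numerator is nonnegative and integrates to $Z$, $q^*_Y$ is a genuine probability density (assuming $Z<\infty$, i.e.\ $I_\alpha(X;Y)<\infty$, which is the nontrivial case). Substituting $p_Y\langle p_{X|Y}\|p_X\rangle_\alpha = Z\, q^*_Y$ into the bracket pulls out $Z^\alpha$ and leaves $\langle q^*_Y\|q_Y\rangle_\alpha^\alpha$, giving
\begin{equation*}
\langle p_{XY}\|p_X q_Y\rangle_\alpha^\alpha = Z^\alpha\,\langle q^*_Y\|q_Y\rangle_\alpha^\alpha.
\end{equation*}
Applying $\tfrac{1}{\alpha-1}\log$ to both sides splits the right-hand side into $\tfrac{\alpha}{\alpha-1}\log Z + D_\alpha(Q^*_Y\|Q_Y)$, and by Definition~\ref{def-info} the first summand is exactly $I_\alpha(X;Y)$. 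This is the claimed identity~\eqref{eq-sibson-identity}.

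For the variational formula~\eqref{eq-info-min}, observe that in~\eqref{eq-sibson-identity} only the divergence term $D_\alpha(Q^*_Y\|Q_Y)$ depends on $Q_Y$; by nonnegativity of $\alpha$-divergence it is $\geq 0$ with equality iff $Q_Y = Q^*_Y$, so the minimum over $Q_Y$ equals $I_\alpha(X;Y)$ and is attained at the tilted distribution $Q^*_Y$.

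There is no serious obstacle here; the only thing to be careful about is the bookkeeping of the $\alpha$ versus $\alpha-1$ exponents when converting the $Z^\alpha$ factor into $I_\alpha$, and the verification that $q^*_Y$ is a probability density (a.s.\ finite and integrating to one) so that $D_\alpha(Q^*_Y\|Q_Y)$ is well-defined. Both are routine, and the proof as sketched is essentially just the substitution that the preceding paragraph of the paper has already set up.
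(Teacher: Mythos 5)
Your proof is correct and follows exactly the route the paper itself takes: evaluate the inner $x$-integral to rewrite $\langle p_{XY}\|p_Xq_Y\rangle_\alpha^\alpha$ as $\bigl\langle p_Y\langle p_{X|Y}\|p_X\rangle_\alpha\|q_Y\bigr\rangle_\alpha^\alpha$, normalize the tilted density $q^*_Y$, pull out $Z^\alpha$, and take logarithms, with the variational formula then following from nonnegativity of $\alpha$-divergence. Your added remarks on finiteness of the normalizer and the sign bookkeeping are sensible but routine, as you note.
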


\subsection{Generalized Fano's Inequality}

Assume $X$ is discrete and estimated from $Y$ using the MAP rule, with (maximal) probability of success $\P_s=\P_s(X|Y)=\E\sup_x p_{X|Y}(x|Y)$. Also let $\P_s(X)=\sup p_X$ be the probability of success when guessing $X$ without even knowing $Y$. Using the DPI for $\alpha$-information and $\alpha$-divergence, we have the following
\begin{lemma}[Rioul's Generalized Fano Inequality~{\cite[Thm.~1]{Rioul21}}]\label{lem-gen-fano}
\begin{equation}
I_\alpha(X;Y) \geq d_\alpha\bigl(\P_s(X|Y)\|\P_s(X)\bigr) 
\end{equation}
where 
\begin{equation}\label{eq-binary-div}
 d_\alpha (p\|q)=  \tfrac{1}{\alpha-1} \log \bigl( p^\alpha q^{1-\alpha} + (1-p)^\alpha (1-q)^{1-\alpha} \bigr)
\end{equation}
denotes binary $\alpha$-divergence.
\end{lemma}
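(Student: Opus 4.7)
The plan is to combine Sibson's identity \eqref{eq-info-min} with the DPI for $\alpha$-divergence applied to a binary MAP success indicator. By \eqref{eq-info-min} it suffices to establish $D_\alpha(P_{XY}\|P_X Q_Y) \geq d_\alpha(\P_s(X|Y)\|\P_s(X))$ for every $Q_Y$, since the minimum of the left-hand side over $Q_Y$ equals $I_\alpha(X;Y)$. Fix a (measurable) MAP estimator $\hat{X}(y) \in \arg\max_x p_{X|Y}(x|y)$ and consider the deterministic reduction $\phi(x,y) = \mathbf{1}\{x = \hat{X}(y)\} \in \{0,1\}$. Applying Property~\ref{dpi_alpha} to the binary channel $\phi$ yields $D_\alpha(P_{XY}\|P_X Q_Y) \geq d_\alpha(p\|q)$, where $p,q$ are the Bernoulli parameters of the pushforwards of $\phi$ under $P_{XY}$ and $P_X Q_Y$, respectively.

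A short calculation identifies $p = \E_Y[\sup_x p_{X|Y}(x|Y)] = \P_s(X|Y)$ and $q = \sum_y Q_Y(y)\,p_X(\hat{X}(y)) \leq \sup_x p_X(x) = \P_s(X)$. Exchanging $\sup$ with expectation also gives $\P_s(X|Y) \geq \P_s(X)$, so regardless of the choice of $Q_Y$ one has the chain $0 \leq q \leq \P_s(X) \leq p$.

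It then remains to verify that $q \mapsto d_\alpha(p\|q)$ is nonincreasing on $[0,p]$, which is a short direct differentiation of \eqref{eq-binary-div}, uniformly valid for $\alpha \in (0,\infty)\setminus\{1\}$ once one tracks the sign of the $1/(\alpha-1)$ prefactor against the $(1-\alpha)$ produced by the power rule. Monotonicity combined with $q \leq \P_s(X) \leq p$ then produces $d_\alpha(p\|q) \geq d_\alpha(p\|\P_s(X))$ for every $Q_Y$, and taking the infimum over $Q_Y$ on the left gives the claimed bound. I expect the monotonicity verification to be the only place where genuine computation is required: the remaining ingredients reduce to Sibson's variational formula plus a single DPI application, which is the direct analogue of the binary-test argument behind the classical $\alpha=1$ Fano inequality.
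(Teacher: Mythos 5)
Your proof is correct and follows essentially the route the paper indicates: the paper does not reprove this lemma but cites~\cite[Thm.~1]{Rioul21} and notes that it rests on the DPIs for $\alpha$-information and $\alpha$-divergence, which is exactly your combination of Sibson's variational formula \eqref{eq-info-min} with a binary-test reduction and the monotonicity of $q\mapsto d_\alpha(p\|q)$ on $[0,p]$. The only point worth a word in a careful write-up is the existence of a measurable MAP selector $\hat X(y)$ attaining the supremum, which is immediate here since $X$ is discrete with finitely many values.
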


\section{Conditional $\alpha$-Information}\label{sec-def}

\subsection{Definition as a Log-Expectation Expression}
As a natural continuation of the definitions in the preceding section, we  define the conditional $\alpha$-information with a ``log-expectation'' closed-form expression,  obtained by taking the expectation over the conditional variable inside the logarithm in the expression of Sibson's (unconditional) $\alpha$-information~\eqref{eq-sibson}:
\begin{definition}[Closed-Form Definition of $\alpha$-Information]\label{def-cond-info}
The $\alpha$-information~\cite{Sibson69,Verdu15} 
\begin{align}
I_\alpha(X;Y|Z)= \frac{\alpha}{\alpha-1} \log \E_Z\E_{Y|Z} \langle p_{X|YZ}\|p_{X|Z}\rangle_\alpha \notag\\
= \frac{\alpha}{\alpha-1} \log \E_{YZ} \langle p_{X|YZ}\|p_{X|Z}\rangle_\alpha
\label{eq-cond-info}
\end{align} 
To the best of our knowledge, this definition has not been considered elsewhere.
\end{definition}

\subsection{Basic Properties}
Our definition enjoys three important properties: \emph{consistency, UEP and DPI}.

\begin{property}[Consistency of Conditional $\alpha$-Information w.r.t. $\alpha$-Information]
If $Z$ is independent from $(X,Y)$ then $I_\alpha(X;Y|Z)=I_\alpha(X;Y)$. 
\end{property}
\begin{proof}
Obvious from the definitions. 
\end{proof}

\begin{property}[UEP for Conditional $\alpha$-Information]\label{prop-UEP-cond-info}
If $U\sim\mathcal{U}(M)$ is uniformly distributed independent of $Z$, then $I_\alpha(U;Y|Z) = H_\alpha(U) - H_\alpha(U|YZ)=\log M - H_\alpha(U|YZ)$.
\end{property}
\begin{proof}
Similarly as for the preceding UEPs, we have $\langle p_{U|YZ}\|u\rangle_\alpha=M^{\frac{\alpha-1}{\alpha}}\|p_{U|YZ}\|_\alpha$. Averaging over $(Y,Z)$ and taking the logarithm gives the announced formula.
\end{proof}

We say that a sequence of random variables forms a \emph{conditional Markov chain} given some random variable $T$ if it is Markov for any $T=t$. 
\begin{property}[DPI for Conditional $\alpha$-Information]\label{prop-cond-DPI}
If $W-X-Y-Z$ forms a conditional Markov chain given $T$, then $I_\alpha(X;Y|T) \geq I_\alpha(W;Z|T)$.
\end{property}
\begin{proof}
By Property~\ref{prop-DPI-info}, $I_\alpha(X;Y|T=t) \geq I_\alpha(W;Z|T=t)$ for any $t$.
From Definition~\ref{def-info} this gives $\langle p_{X|Y,T}\|p_{X|T}\rangle_\alpha\geq \langle p_{W|Z,T}\|p_{W|T=t}\rangle_\alpha$ for $\alpha>1$ and the opposite inequality for $0<\alpha<1$. This in turn from Definition~\ref{def-cond-info} gives the announced inequality for any $\alpha$.
\end{proof}

\subsection{Conditional Sibson's Identity}

\begin{proposition}[Conditional Sibson's Identity] \label{prop-cond-sibson-identity}
One has
\begin{equation}
D_\alpha(P_{XYZ}\|P_{X|Z}Q_{YZ}) = D_\alpha(Q^*_{YZ}\|Q_{YZ}) + I_\alpha(X;Y|Z), 
\end{equation}
hence the following alternate minimizing definition:
\begin{equation}\label{eq-cond-info-min}
I_\alpha(X;Y|Z)  =\min_{Q_{YZ}}  D_\alpha(P_{XYZ}\|P_{X|Z}Q_{YZ})
\end{equation}
\end{proposition}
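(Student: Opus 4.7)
The plan is to mirror the derivation of the unconditional Sibson identity that precedes the proposition, treating the pair $(Y,Z)$ as the ``output'' variable and $P_{X|Z}$ as the ``reference marginal'' replacing $P_X$. First I would expand the bracket $\langle p_{XYZ}\|p_{X|Z}\,q_{YZ}\rangle^\alpha_\alpha$ by writing $p_{XYZ}=p_{X|YZ}\,p_{YZ}$ and integrating over $X$ first. Because the exponents $\alpha$ and $1-\alpha$ fall on $p_{X|YZ}$ and $p_{X|Z}$ respectively, the inner $X$-integral produces exactly $\langle p_{X|YZ}\|p_{X|Z}\rangle^\alpha_\alpha$, and the full bracket collapses to $\bigl\langle p_{YZ}\,\langle p_{X|YZ}\|p_{X|Z}\rangle_\alpha \,\|\, q_{YZ}\bigr\rangle^\alpha_\alpha$, which is formally Sibson's expression with $(Y,Z)$ playing the role of $Y$.

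Next, by analogy with the unconditional construction, I would introduce the tilted distribution
\begin{equation*}
q^*_{YZ} \;=\; p_{YZ}\,\langle p_{X|YZ}\|p_{X|Z}\rangle_\alpha \Bigm/ \E_{YZ}\langle p_{X|YZ}\|p_{X|Z}\rangle_\alpha,
\end{equation*}
so that the numerator inside the remaining bracket becomes $c\cdot q^*_{YZ}$ with normalization $c=\E_{YZ}\langle p_{X|YZ}\|p_{X|Z}\rangle_\alpha$. Pulling $c$ out of the $\alpha$-bracket factors it as $c^\alpha\,\langle q^*_{YZ}\|q_{YZ}\rangle^\alpha_\alpha$. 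Applying $\tfrac{1}{\alpha-1}\log$ on both sides then cleanly splits into $\tfrac{\alpha}{\alpha-1}\log c$, which is exactly $I_\alpha(X;Y|Z)$ by Definition~\ref{def-cond-info}, plus $D_\alpha(Q^*_{YZ}\|Q_{YZ})$, yielding the announced identity.

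For the minimizing characterization~\eqref{eq-cond-info-min}, I would note that only the $D_\alpha(Q^*_{YZ}\|Q_{YZ})$ term depends on $Q_{YZ}$, so by non-negativity of $\alpha$-divergence (with equality iff $Q_{YZ}=Q^*_{YZ}$) the minimum is attained at $Q_{YZ}=Q^*_{YZ}$ and equals $I_\alpha(X;Y|Z)$. The only real subtlety rather than obstacle is bookkeeping: parsing $P_{X|Z}Q_{YZ}$ unambiguously as the joint density $(x,y,z)\mapsto p_{X|Z}(x|z)\,q_{YZ}(y,z)$ on the product space, and confirming that the $X$-integration of $p_{X|YZ}^\alpha\,p_{X|Z}^{1-\alpha}$ indeed produces precisely the conditional bracket from Definition~\ref{def-cond-info} (with $p_{X|Z}$, not $p_X$, in the ``denominator'' role). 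Once this parsing is fixed, the computation is algebraically identical to the unconditional case.
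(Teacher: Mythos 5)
Your proposal is correct and follows essentially the same route as the paper: expand $\langle p_{XYZ}\|p_{X|Z}q_{YZ}\rangle^\alpha_\alpha$ by integrating out $X$ to get $\bigl\langle p_{YZ}\langle p_{X|YZ}\|p_{X|Z}\rangle_\alpha\,\|\,q_{YZ}\bigr\rangle^\alpha_\alpha$, normalize to define $Q^*_{YZ}$, and take logarithms, with the minimization following from non-negativity of $\alpha$-divergence. You merely spell out the normalization-constant bookkeeping that the paper leaves implicit.
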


\begin{proof}
Similarly as in the case of $\alpha$-information, it is straightforward to compute
\begin{align}
\langle p_{XYZ} \| p_{X|Z} q_{YZ} \rangle^\alpha_\alpha &=
\iiint p^\alpha_{YZ} p^\alpha_{X|YZ} p_{X|Z}^{1-\alpha} q_{YZ}^{1-\alpha} \\
&= \bigl\langle p_{YZ} \langle p_{X|YZ}\|p_{X|Z}\rangle_{\!\alpha} \,\|\, q_{YZ} \bigr\rangle^\alpha_\alpha
\end{align}
Defining the (suitably normalized) distribution  $q^*_{YZ}=p_{YZ} \langle p_{X|YZ}\|p_{X|Z}\rangle_{\!\alpha}  / \E_{YZ} \langle p_{X|YZ}\|p_{X|Z}\rangle_{\!\alpha}$, substituting and taking the logarithm gives the announced identity.
\end{proof}

\section{Comparison to Previous Definitions}\label{sec-comp}

\subsection{Various Other Definitions}

All previous definitions of conditional $\alpha$-information we are aware of are variations of the form~\eqref{eq-cond-info-min} where $\alpha$-divergence is minimized with respect to different probability measures $Q_{X|Z}$, $Q_{Y|Z}$, $Q_Z$ or combinations. There are exactly $2^3=8$ possibilities:
\begin{enumerate}[(i)]
\item[(o)] $I^{000}_\alpha(X;Y|Z)=\hphantom{\min}D_\alpha(P_{XYZ}\|P_{X|Z}P_{Y|Z}P_Z)$.
\item $I^{001}_\alpha(X;Y|Z)=\min\limits_{Q_Z}D_\alpha(P_{XYZ}\|P_{X|Z}P_{Y|Z}Q_Z)$.
\item $I^{010}_\alpha(X;Y|Z)=\min\limits_{Q_{Y|Z}}D_\alpha(P_{XYZ}\|P_{X|Z}Q_{Y|Z}P_Z)$.
\item $I^{011}_\alpha(X;Y|Z)=\min\limits_{Q_{YZ}}D_\alpha(P_{XYZ}\|P_{X|Z}Q_{YZ})$.
\item $I^{100}_\alpha(X;Y|Z)=\min\limits_{Q_{X|Z}}D_\alpha(P_{XYZ}\|Q_{X|Z}P_{Y|Z}P_Z)$.
\item $I^{101}_\alpha(X;Y|Z)=\min\limits_{Q_{XZ}}D_\alpha(P_{XYZ}\|Q_{XZ}P_{Y|Z})$.
\item $I^{110}_\alpha(X;Y|Z)=\min\limits_{Q_{X|Z}Q_{Y|Z}}D_\alpha(P_{XYZ}\|Q_{X|Z}Q_{Y|Z}P_Z)$.
\item $I^{111}_\alpha(X;Y|Z)=\min\limits_{Q_{X|Z}Q_{YZ}}D_\alpha(P_{XYZ}\|Q_{X|Z}Q_{YZ})$
\end{enumerate}

Definition~(o) is mentionned in~\cite[Eq.~(70)]{TomamichelHayashi18}. Definition (i) is the main proposal of Esposito et al.~\cite{EspositoWuGastpar21}. Definition (ii) is discussed by Tomamichel and Hayashi~\cite[Eq.~(74)]{TomamichelHayashi18} and is equivalent to definition (iv) by permuting the roles of $X$ and $Y$: $I^{100}_\alpha(X;Y|Z)=I^{010}_\alpha(Y;X|Z)$.
Our definition~\eqref{eq-cond-info-min} is definition~(iii), and is equivalent to definition (v) by permuting the roles of $X$ and $Y$: $I^{101}_\alpha(X;Y|Z)=I^{011}_\alpha(Y;X|Z)$.
Finally, definitions~(vi) and~(vii) seem new and related to a conditional version of
the Lapidoth-Pfister mutual information~\cite{LapidothPfister16}:
$J_{\alpha}(X;Y)=\min_{Q_XQ_Y} D_{\alpha} (P_{XY} \parallel Q_{X}Q_{Y})$.
Thus we need only to compare our definition to~(o), (i), (ii), (vi) and~(vii).

We now discuss various properties  for these definitions, by decreasing order of importance: The fact that they admit or not a closed-form expression in terms of the involved probability densities; their consistency with respect to $\alpha$-information $I_\alpha(X;Y|0)=I_\alpha(X;Y)$; the existence of a uniform expansion of the form $I_\alpha(U;Y|Z)=\log M - H_\alpha(U|YZ)$ when  $U\sim \mathcal{U}(M)$ is independent of $Z$; and the fact that they satisfy data processing inequalities for conditional Markov chains. 

\subsection{Closed-Form and Consistency}

Definition~(o) is by itself a closed-form expression but is clearly \emph{inconsistent} with respect to Sibson's $\alpha$-information since $I^{000}_\alpha(X;Y|0)= D_\alpha(P_{XY}\|P_{X}P_{Y})$ which by~\eqref{eq-info-min} is $\geq I_\alpha(X;Y)$ where the inequality is, in general, strict.

Definition~(i) of Esposito et al. does admit a closed-form expression~\cite[Thm.\,2]{EspositoWuGastpar21}.
In fact, since  
\begin{align*}
\langle p_{XYZ} \| p_{X|Z} p_{Y|Z} q_Z \rangle^\alpha_\alpha &=
\!\!\iiint\! p^\alpha_{Z} p^\alpha_{XY|Z} (p_{X|Z}p_{Y|Z})^{1\!-\!\alpha} q_{Y}^{1\!-\!\alpha} \\
&= \bigl\langle p_{Z} \langle p_{XY|Z}\|p_{X|Z}p_{Y|Z}\rangle_{\!\alpha} \,\|\, q_{Z} \bigr\rangle^\alpha_\alpha,
\end{align*}
letting $q^*_{Z}=p_{Z} \langle p_{X|YZ}\|p_{X|Z}p_{Y|Z}\rangle_{\!\alpha}  / \E_{Z} \langle p_{X|YZ}\|p_{X|Z}p_{Y|Z}\rangle_{\!\alpha}$ and taking the logarithm gives the following variation of Sibson's identity (whose existence is mentionned but does not explicitly appear in~\cite{EspositoWuGastpar21}):
\begin{proposition}
\begin{equation}
D_\alpha(P_{XYZ}\|P_{X|Z}P_{Y|Z}Q_Z) = D_\alpha(Q^*_{Z}\|Q_{Z}) + I^{001}_\alpha(X;Y|Z), 
\end{equation} 
with the following closed-form expression:
\begin{equation}\label{eq-info-esposito}
I_\alpha^{001}(X;Y|Z)=  \tfrac{\alpha}{\alpha-1} \log \E_Z \langle p_{XY|Z}\|p_{X|Z}p_{Y|Z}\rangle_\alpha.
\end{equation}
\end{proposition}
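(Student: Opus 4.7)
The strategy is to repeat, almost verbatim, the derivation of (unconditional) Sibson's identity and of the conditional Sibson identity in Proposition~\ref{prop-cond-sibson-identity}, but applied to the factored reference measure $P_{X|Z}P_{Y|Z}Q_Z$ in which only the $Z$-marginal is left free. The crucial algebraic identity
\begin{equation*}
\langle p_{XYZ} \| p_{X|Z} p_{Y|Z} q_Z \rangle^\alpha_\alpha = \bigl\langle p_Z \langle p_{XY|Z}\|p_{X|Z}p_{Y|Z}\rangle_\alpha \,\|\, q_Z \bigr\rangle^\alpha_\alpha
\end{equation*}
is already displayed immediately before the statement (a Fubini-style computation that separates the integration over $(X,Y)$ from that over $Z$), so only the subsequent normalization and minimization steps remain.

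First I would abbreviate $f(z):=\langle p_{XY|Z=z}\|p_{X|Z=z}p_{Y|Z=z}\rangle_\alpha$ and introduce the tilted density $q^*_Z(z):=p_Z(z)f(z)/\E_Z f(Z)$. One checks that $q^*_Z$ is a bona fide probability density (nonnegative and integrating to $1$ by construction) whenever $\E_Z f(Z)$ is finite and strictly positive. Factoring out the normalization constant, the right-hand side of the identity above becomes $(\E_Z f(Z))^\alpha \,\langle q^*_Z \| q_Z\rangle^\alpha_\alpha$.

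Applying $\tfrac{1}{\alpha-1}\log$ to both sides, the left-hand side becomes $D_\alpha(P_{XYZ}\|P_{X|Z}P_{Y|Z}Q_Z)$, while the right-hand side splits into $\tfrac{\alpha}{\alpha-1}\log \E_Z f(Z)$ plus $D_\alpha(Q^*_Z\|Q_Z)$. This is precisely the announced additive identity. The closed form~\eqref{eq-info-esposito} is then obtained by minimizing over $Q_Z$: by nonnegativity of the $\alpha$-divergence, $D_\alpha(Q^*_Z\|Q_Z)\geq 0$ with equality iff $Q_Z=Q^*_Z$, so the minimum is attained at $Q_Z=Q^*_Z$ and equals $\tfrac{\alpha}{\alpha-1}\log \E_Z f(Z)$, which is the claimed expression for $I^{001}_\alpha(X;Y|Z)$.

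The argument is essentially mechanical once one spots the factorization of $\langle p_{XYZ}\|p_{X|Z}p_{Y|Z}q_Z\rangle^\alpha_\alpha$ as a single Rényi-type expectation in $Z$; the only mildly delicate point is ensuring that the normalization constant $\E_Z f(Z)$ is finite and nonzero so that $q^*_Z$ is well defined, but this holds whenever $I^{001}_\alpha$ is finite, and the boundary regimes ($\alpha\to 1$ and $\alpha\to\infty$) can be handled by the usual continuity arguments for Rényi quantities.
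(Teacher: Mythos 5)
Your proof is correct and follows essentially the same route as the paper: the Fubini-style factorization of $\langle p_{XYZ}\|p_{X|Z}p_{Y|Z}q_Z\rangle^\alpha_\alpha$, the introduction of the tilted density $q^*_Z$, and taking the logarithm to split off $D_\alpha(Q^*_Z\|Q_Z)$, with the closed form obtained by minimizing over $Q_Z$ via nonnegativity of the $\alpha$-divergence. Your added remarks on the normalization constant being finite and positive are a harmless (and slightly more careful) elaboration of what the paper leaves implicit.
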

\noindent However, $I_\alpha^{001}$ is \emph{inconsistent} (with respect to Sibson's $\alpha$-information)  for the same reason as in the case of $I^{000}_\alpha$: From~\eqref{eq-info-esposito} we have $I^{001}_\alpha(X;Y|0)= D_\alpha(P_{XY}\|P_{X}P_{Y}) \geq I_\alpha(X;Y)$.

Definition~(ii) of Tomamichel and Hayashi also admits a closed-form expression~\cite[Eq.~(75)]{TomamichelHayashi18}. In fact by the (unconditional) Sibson identity~\eqref{eq-sibson-identity} applied to all variables conditioned on $Z=z$ for any $z$, on easily sees that $D_\alpha(P_{XYZ}\|P_{X|Z}Q_{Y|Z}P_Z)$ achieves its minimum when for $q_{Y|Z}=q^*_{Y|Z}= p_{Y|Z} \langle p_{X|YZ}\|p_X\rangle_{\!\alpha}  / \E_{Y|Z} \langle p_{X|YZ}\|p_{X|Z}\rangle_{\!\alpha}$ as given above in the proof of~\eqref{eq-sibson-identity}, which gives
\begin{equation}\label{eq-info-tomamichel}
I_\alpha^{010}(X;Y|Z)=  \tfrac{1}{\alpha-1} \log \E_Z (\E_{Y|Z}\langle p_{X|YZ}\|p_{X|Z}\rangle_\alpha)^\alpha.
\end{equation}
From this it follows that $I_\alpha^{010}(X;Y|0)=I_\alpha(X;Y)$, proving that $I_\alpha^{010}$ is \emph{consistent}.

Finally, definitions~(vi) and~(vii) are neither closed-form nor consistent; for when $Z\equiv0$, the definitions reduces to the Lapidoth-Pfister mutual information:
$J_{\alpha}(X;Y)=\min_{Q_XQ_Y} D_{\alpha} (P_{XY} \parallel Q_{X}Q_{Y})$
which already does not admit a closed-form expression, and for which $J_{\alpha}(X;Y)\leq I_\alpha(X;Y)$ where the inequality is, in general, strict~\cite{LapidothPfister16}.
In the following we focus on the other definitions which admit closed-form expressions.

\subsection{Uniform Expansion Property}

Using the above closed-form expressions it is easy to check the UEP when $U\sim \mathcal{U}(M)$ is independent of $Z$, neither 
$I^{000}_\alpha(U;Y|Z)$, nor $I^{001}_\alpha(U;Y|Z)$, nor $I^{010}_\alpha(U;Y|Z)$ 
equals $\log M - H_\alpha(U|YZ)$.
This is not surprising since in general, from the different minimizations of $\alpha$-divergence, 
\begin{equation}
\begin{split}
I_\alpha(X;Y|Z)&=I^{011}_\alpha(X;Y|Z) \\&\leq \min \bigl\{ I^{001}_\alpha(X;Y|Z), I^{010}_\alpha(X;Y|Z)\bigr\}
\\&\leq I^{000}_\alpha(X;Y|Z).
\end{split}
\end{equation}
where the inequalities are, in general, strict. Hence the only case where the UEP (which is crucial in our subsequent derivations) holds is for the definition~(iii) proposed in this paper.

\subsection{Data Processing Inequality} 

Finally, since definitions~(o) and~(i) are inconsistent with $I^{000}_\alpha(X;Y|0)=I^{001}_\alpha(X;Y|0)= D_\alpha(P_{XY}\|P_{X}P_{Y})$, they do not even satisfy data processing inequalities for a unconditional Markov chain. Therefore, the only remaining candidate for DPI is definition~(ii).
\begin{property}[DPI for  $I^{010}_\alpha(X;Y|Z)$]
If $W-X-Y-Z$ forms a conditional Markov chain given $T$, then $I_\alpha^{010}(X;Y|T) \geq I_\alpha^{010}(W;Z|T)$.
\end{property}

\begin{proof}
We mirror the proof of Property~\ref{prop-DPI-info}.
Let $P_{X,Y,T}\!\to\!\boxed{P_{X,Z,T|X,Y,T}}\!\to\!P_{X,Z,T}\to\!\boxed{P_{W,Z,T|X,Z,T}}\!\to P_{W,Z,T}$. By the conditional Markov condition, we have $P_{X,Z,T|X,Y,T}=P_{X,T|X,T} P_{Z|X,Y,T}=P_{X,T|X,T} P_{Z|Y,T}$ where $P_{X,T|X,T}$ is the identity operator; similarly $P_{W,Z,T|X,Z,T}=P_{W|X,Z,T}P_{Z,T|Z,T}=P_{W|X,T}P_{Z,T|Z,T}$. Thus if $Q_{Y|T} \to\!\boxed{P_{Z|Y,T}}\!\to Q_{Z|T}$, 
we find $P_{X|T}Q_{Y|T}P_T \to\!\boxed{P_{X,Z,T|X,Y,T}}\!\to P_{X|T}Q_{Z|T}P_T
\to\!\boxed{P_{W,Z,T|X,Z,T}}\!\to P_{W|T}Q_{Z|T}P_T$.
By the data processing inequality for $\alpha$-divergence (Property~\ref{dpi_alpha}),
$D_\alpha(P_{X,Y,T}\|P_{X|T}Q_{Y|T}P_T)\geq D_\alpha(P_{W,Z,T}\|P_{W|T}Q_{Z|T}P_T)\geq I_\alpha(W;Z|T)$.
Minimizing over $Q_{Y|T}$ gives the announced DPI.
\end{proof}

Table~\ref{tab} summarizes the comparison between properties of (o)--(vii).

\begin{table}[h!]
\caption{Comparison of some properties for the various definitions.\label{tab}}
\begin{center}
\begin{tabular}{cccccc}
Definition & Ref. & Closed-form & Consistency & UEP & DPI  \\\hline\hline
o & \cite{TomamichelHayashi18} & yes & no & no & no \\\hline
i & \cite{EspositoWuGastpar21} & yes & no & no & no\\\hline
ii,iv &  \cite{TomamichelHayashi18}& yes & yes & no & yes \\\hline
iii,v & (this paper) & \textbf{yes} & \textbf{yes} &\textbf{yes} & \textbf{yes} \\\hline
vi,vii & --- & no & no & & 
\\\hline
\end{tabular}
\end{center}
\end{table}

\section{Application to Side-Channel Analysis}\label{sec-sca}

\subsection{Theoretical Derivation}

We follow the framework and notations from~\cite{CheriseyGuilleyRioulPiantanida19c,CheriseyGuilleyRioulPiantanida19} and~\cite{HeuserRioulGuilley14}. 
Let $K$ be a secret key and $T$ be a plain text known to the attacker. During cryptographic processing, $X$ is leaked from the implementation and measured as a ``trace'' $Y$ by the attacker at the output of some noisy measurement channel.  The secret key $K$ can take $M$ equiprobable values and is evidently independent of the text $T$.
The leakage function $X=f(K,T)$ is unknown, but deterministic.
The attacker then exploits his knowledge of $T$ and $Y$ to estimate the secret $\hat{K}$ and we let $\P_s=\P(\hat{K}=K)$ be the probability of success. The communication channel model is depicted in Fig.~\ref{fig}.

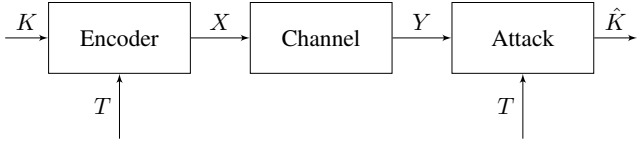
\begin{figure}[h!]
\centering
\resizebox{0.47\textwidth}{!}{
\begin{tikzpicture}[auto, node distance=1.7cm,>=latex']
\node [input, name=input] {};
\node [block, right of=input] (encoder) {Encoder};
\node [block, right of=encoder,  node distance=3cm] (channel) {Channel};
\node [block, right of=channel,  node distance=3cm] (attack) {Attack};
\node [output, right of=attack] (output) {};
\node[input] (text) at ([yshift=-1.5cm]$(encoder)!0.0!(channel)$) {};  
\node[input] (text2) at ([yshift=-1.5cm]$(channel)!1.0!(attack)$) {};  
\draw [->] (encoder) -- node[name=X] {$X$} (channel);
\draw [->] (input) -- node {$K$} (encoder);
\draw [->] (channel)-- node [name=Y] {$Y$}(attack);
\draw [->] (attack) -- node [name=Khat] {$\hat{K}$}(output);
\draw[->] (text) -- node{$T$} (encoder);
\draw[->] (text2) -- node{$T$} (attack);
\end{tikzpicture}}
\caption{Side-channel seen as a communication channel.}\label{fig}
\end{figure}

\begin{theorem} One has the following upper bound on the probability of success $\P_s$:
\label{thm:bound:sr}
\begin{equation}\label{eq-main}
 I_\alpha(X,Y|T) \ge d_{\alpha}(\mathbb{P}_s\parallel \frac{1}{M})
\end{equation}

\end{theorem}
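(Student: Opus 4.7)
The plan is to extend the generalized Fano inequality (Lemma~\ref{lem-gen-fano}) to the conditional setting by chaining the conditional DPI (Property~\ref{prop-cond-DPI}), the conditional Sibson identity (Proposition~\ref{prop-cond-sibson-identity}), and the DPI for $\alpha$-divergence (Property~\ref{dpi_alpha}). Interpreting the left-hand side as $I_\alpha(X;Y|T)$, I would proceed in two reductions followed by a binary comparison.

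First I would reduce from the leakage $X$ to the secret $K$. Because $X=f(K,T)$ is deterministic given $T$ and $Y$ is obtained from $X$ through the noisy measurement channel, the sequence $K-X-Y$ is a conditional Markov chain given $T$. The conditional DPI then yields $I_\alpha(X;Y|T)\ge I_\alpha(K;Y|T)$. Since any attacker's estimate satisfies $\hat K=g(Y,T)$, the chain $K-Y-\hat K$ is also Markov given $T$, and a second application of the conditional DPI gives $I_\alpha(K;Y|T)\ge I_\alpha(K;\hat K|T)$.

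Next, I would rewrite the right-hand side using the conditional Sibson identity as
\[
I_\alpha(K;\hat K|T)=\min_{Q_{\hat K T}}\,D_\alpha(P_{K\hat K T}\,\|\,P_{K|T}Q_{\hat K T}).
\]
For any reference $Q_{\hat K T}$, I would apply the DPI for $\alpha$-divergence to the binary ``equality test'' channel $(k,\hat k,t)\mapsto\mathbb{1}\{k=\hat k\}$. Taking $\hat K$ to be the MAP estimator, the image of $P_{K\hat K T}$ is $\mathrm{Bernoulli}(\mathbb{P}_s)$. The decisive computation is that the image of $P_{K|T}Q_{\hat K T}$ has mean
\[
\sum_{k,\hat k,t}\tfrac{1}{M}\,q_{\hat K T}(\hat k,t)\,\mathbb{1}\{k=\hat k\}=\tfrac{1}{M},
\]
independently of $Q_{\hat K T}$, because $K$ is uniform on $M$ values and independent of $T$. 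Hence $D_\alpha(P_{K\hat K T}\|P_{K|T}Q_{\hat K T})\ge d_\alpha(\mathbb{P}_s\|1/M)$ for every $Q_{\hat K T}$; taking the minimum gives $I_\alpha(K;\hat K|T)\ge d_\alpha(\mathbb{P}_s\|1/M)$, and chaining all the inequalities yields the claim.

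The main obstacle I expect is the final binary reduction: the argument works only because the reference mean collapses to exactly $1/M$ regardless of $Q_{\hat K T}$, which in turn rests on both the uniformity of $K$ and its independence from $T$. This is precisely the UEP-style structure that the log-expectation definition of Section~\ref{sec-def} respects; by the discussion of Section~\ref{sec-comp}, the same derivation would break for $I^{001}_\alpha$, $I^{010}_\alpha$, etc., which gives an additional \emph{a posteriori} justification for choosing Definition~\ref{def-cond-info}.
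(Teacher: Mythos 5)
Your proof is correct. It follows the paper's own route for the two data-processing reductions, namely $I_\alpha(X;Y|T)\ge I_\alpha(K;Y|T)\ge I_\alpha(K;\hat K|T)$ via the conditional DPI, but it then takes a genuinely different path for the last step. The paper uses the UEP for conditional $\alpha$-information together with the DPI for conditional $\alpha$-entropy to write $I_\alpha(K;\hat K|T)=\log M-H_\alpha(K|\hat K,T)\ge \log M-H_\alpha(K|\hat K)=I_\alpha(K;\hat K)$, and only then invokes the unconditional generalized Fano inequality of Lemma~\ref{lem-gen-fano}. You instead prove a \emph{conditional} Fano inequality directly: the conditional Sibson identity (Proposition~\ref{prop-cond-sibson-identity}) gives the variational form $I_\alpha(K;\hat K|T)=\min_{Q_{\hat K T}}D_\alpha(P_{K\hat K T}\|P_{K|T}Q_{\hat K T})$, and the $\alpha$-divergence DPI applied to the equality-test channel sends $P_{K\hat K T}$ to a Bernoulli law of parameter $\P_s$ and \emph{every} reference $P_{K|T}Q_{\hat K T}$ to a Bernoulli law of parameter $1/M$, the key computation being that uniformity of $K$ and its independence from $T$ force the reference mean to collapse to $1/M$. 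Your route is self-contained (no appeal to Lemma~\ref{lem-gen-fano} or to the conditional-entropy DPI as black boxes) and applies verbatim to an arbitrary estimator $\hat K=g(Y,T)$, avoiding the implicit monotonicity step needed in the paper to pass from the MAP success probability to $\P_s$; the paper's route has the expository advantage of isolating the UEP, which is precisely the property that distinguishes Definition~\ref{def-cond-info} from the competitors of Section~\ref{sec-comp}. One small caveat on your closing remark: the binary reduction itself would \emph{not} break for $I^{010}_\alpha$ (its reference $P_{K|T}Q_{\hat K|T}P_T$ also maps to mean $1/M$); what fails for the other definitions is the upstream chain of conditional DPIs or the consistency/UEP needed elsewhere, not the Fano step.
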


\begin{proof}
The chain $K-X-Y$ is Markov given $T$ by assumption but since $X=f(K,T)$,  the chain $X-K-Y$ is also Markov given $T$. Therefore, by the conditional DPI (Property~\ref{prop-cond-DPI}),  $I_\alpha(X,Y|T)=I_\alpha(K,Y|T)$ (inequalities in both directions).
Now since $K-Y-\hat{K}$ is also Markov given $T$, we have $I_\alpha(K;Y|T)\geq I_\alpha(K;\hat{K}|T)$. Since $K$ is equiprobable independent of $T$, by the UEP (Property~\ref{prop-UEP-cond-info}),
$I_\alpha(K;\hat{K}|T)=\log M - H_\alpha(K|\hat{K},T) \geq \log M - H_\alpha(K|\hat{K}) =  I_\alpha (K;\hat{K})$.
Finally, using Lemma~\ref{lem-gen-fano}, $I_\alpha(K;\hat{K}) \ge  d_{\alpha}(\mathbb{P}_s(K|Y)\parallel \mathbb{P}_s(K) )=d_\alpha(\P_s\|\frac{1}{M})$, which proves~\eqref{eq-main}.
\end{proof}

\begin{remark}
From~\eqref{eq-binary-div}, $d_\alpha(p,q)$ is increasing in $p$ when $p\geq q$. Hence~\eqref{eq-main} gives a upper bound on $\P_s$ (which is obviously $\geq 1/M$ since $\P_s=1/M$ corresponds to a blind guess when the attacker does not know $Y$). 
\end{remark}

\subsection{Numerical Simulations}

We consider an implementation of the AES with a large number $q$ of measurement traces. Here $M=256$ and the most commonly used leakage model is 
\begin{equation}
Y_i = w_H(S(T_i\oplus K)) + N_i \qquad  (i=1,2,\ldots, q)
\end{equation}
where $w_H$ denotes the Hamming weight, $S$ denotes a S-box permutation and $\N_i$ are i.i.d~$\sim\mathcal{N}(0, \sigma^2)$. Letting $\X=(X_i)_i$, $\Y=(Y_i)_i$, $\T=(T_i)_i$, we can compute 
$I_\alpha(\mathbf{X},\mathbf{Y}|\mathbf{T}) =I_\alpha(K,\mathbf{Y}|\mathbf{T})$ using Monte-Carlo simulation similarly as in~\cite{CheriseyGuilleyRioulPiantanida19}.

The numerical results on the success probability \emph{upper bounds} for $\alpha=1/2$, $1$, and $2$.
are shown in Fig.~\ref{fig:att:alpha}, which compares them to the average performance of the optimal ML attack (with error bars). Since $I_\alpha(\mathbf{X},\mathbf{Y}|\mathbf{T})$ increases with $q$, these in turn allows us to derive \emph{lower bounds} on the number of traces $q_{\min}$ which are needed to achieve a given success rate $\P_s$. This is illustrated in Fig.~\ref{fig:att:sr:bound}.

\begin{figure}[!htbp]
	\centering
		\includegraphics[width=0.49\textwidth]{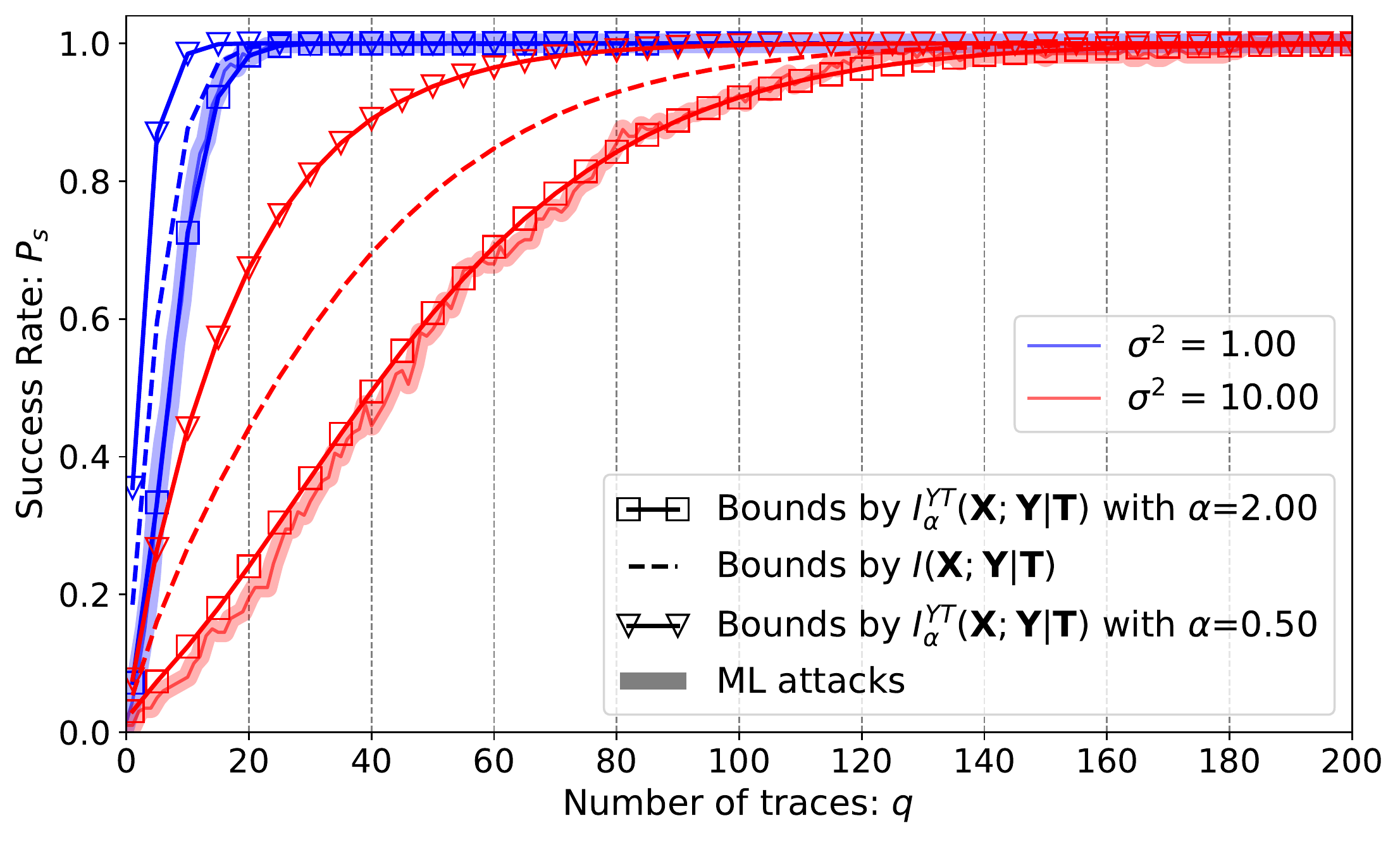}
		\label{fig:att:alpha:2}
	\vspace{-0.4cm}
	\caption{Comparison of upper bounds on success rate $P_s$ given $\alpha$-information $I_\alpha(\mathbf{X},\mathbf{Y}|\mathbf{T})$ for different values of $\alpha$, for a Hamming weight leakage model in an AES-256 implementation.}
	\label{fig:att:alpha}
	\vspace{-0.2cm}
\end{figure}

\begin{figure}[!htbp]
	\centering
	\vspace{-0.2cm}
		\includegraphics[width=0.49\textwidth]{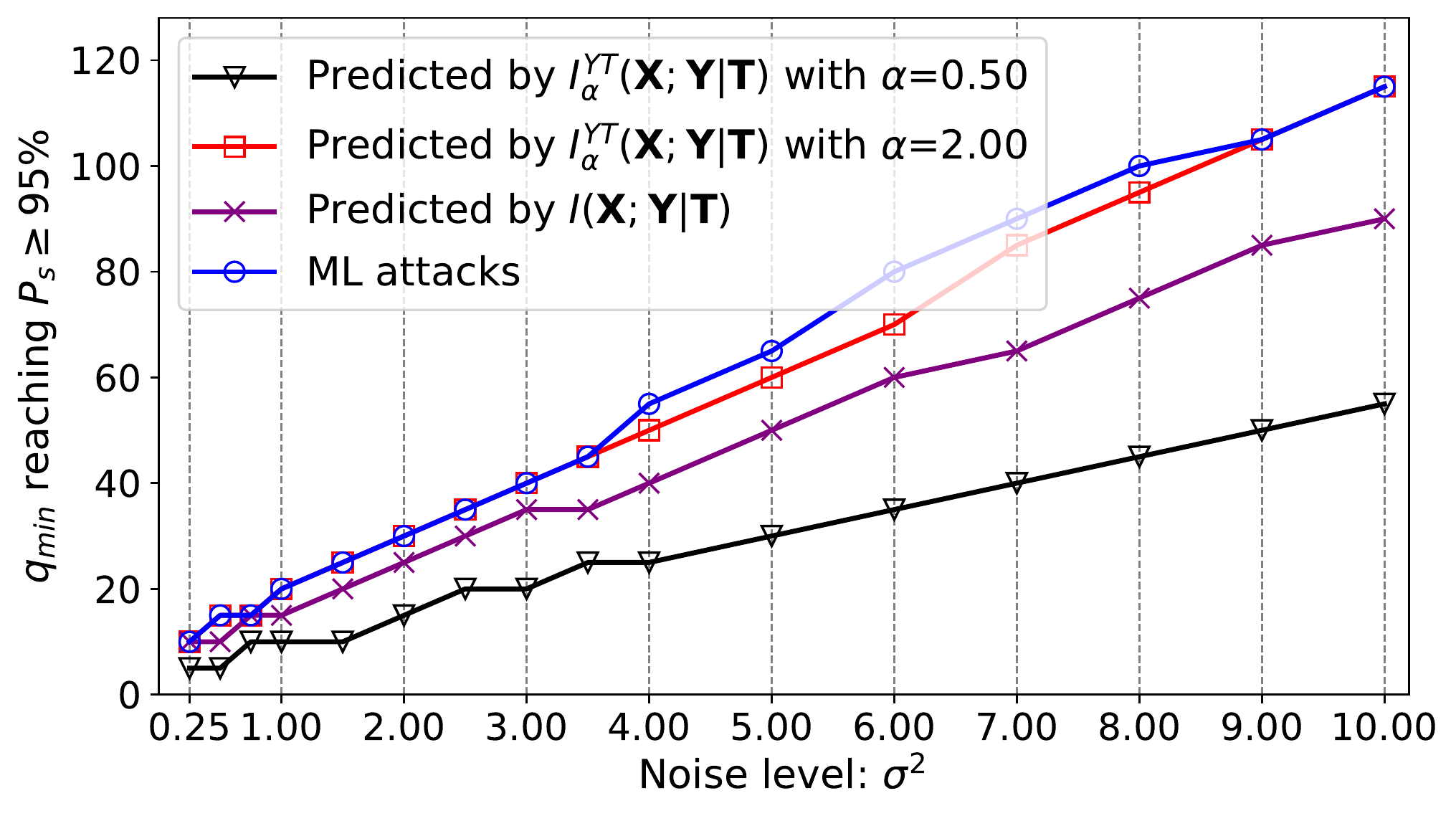}
		\label{fig:att:sr:bound:2}
	\vspace{-0.4cm}
	\caption{Comparison of lower bounds on the number of traces $q_{\min}$ required to reach $\P_s\geq 95\%$ success rate.}
	\label{fig:att:sr:bound}
	\vspace{-0.2cm}
\end{figure}

It is quite remarkable to see that the case $\alpha=2$, corresponding to a \emph{collision} entropy $H_\alpha(K|\hat{K})$, gives a very sharp bound in our setting, which improves the results of~\cite{CheriseyGuilleyRioulPiantanida19c,CheriseyGuilleyRioulPiantanida19}   for $\alpha=1$ very much.

\IEEEtriggeratref{9}


\begin{thebibliography}{10}
\providecommand{\url}[1]{#1}
\csname url@samestyle\endcsname
\providecommand{\newblock}{\relax}
\providecommand{\bibinfo}[2]{#2}
\providecommand{\BIBentrySTDinterwordspacing}{\spaceskip=0pt\relax}
\providecommand{\BIBentryALTinterwordstretchfactor}{4}
\providecommand{\BIBentryALTinterwordspacing}{\spaceskip=\fontdimen2\font plus
\BIBentryALTinterwordstretchfactor\fontdimen3\font minus
  \fontdimen4\font\relax}
\providecommand{\BIBforeignlanguage}[2]{{%
\expandafter\ifx\csname l@#1\endcsname\relax
\typeout{** WARNING: IEEEtranS.bst: No hyphenation pattern has been}%
\typeout{** loaded for the language `#1'. Using the pattern for}%
\typeout{** the default language instead.}%
\else
\language=\csname l@#1\endcsname
\fi
#2}}
\providecommand{\BIBdecl}{\relax}
\BIBdecl

\bibitem{Arimoto75}
S.~Arimoto, ``Information measures and capacity of order $\alpha$ for discrete
  memoryless channels,'' in \emph{Topics in Information Theory, Proc. 2nd
  Colloq. Math. Societatis J\'anos Bolyai}, A.~Joux, Ed., vol.~16, 1975, pp.
  41--52.

\bibitem{CheriseyGuilleyRioulPiantanida19}
\BIBentryALTinterwordspacing
{\'E}.~de~Ch{\'{e}}risey, S.~Guilley, O.~Rioul, and P.~Piantanida, ``Best
  information is most successful - {M}utual information and success rate in
  side-channel analysis,'' \emph{{IACR} Trans. Cryptogr. Hardw. Embed. Syst.},
  vol. 2019, no.~2, pp. 49--79, 2019. [Online]. Available:
  \url{https://doi.org/10.13154/tches.v2019.i2.49-79}
\BIBentrySTDinterwordspacing

\bibitem{CheriseyGuilleyRioulPiantanida19c}
------, ``An information-theoretic model for side-channel attacks in embedded
  hardware,'' in \emph{2019 IEEE International Symposium on Information Theory
  (ISIT 2019)}, July 2019.

\bibitem{EspositoGastparIssa}
\BIBentryALTinterwordspacing
A.~R. Esposito, M.~Gastpar, and I.~Issa. Generalization error bounds via
  r{\'e}nyi-, $f$-divergences and maximal leakage. [Online]. Available:
  \url{https://arxiv.org/abs/1912.01439}
\BIBentrySTDinterwordspacing

\bibitem{EspositoWuGastpar21}
\BIBentryALTinterwordspacing
A.~R. Esposito, D.~Wu, and M.~Gastpar, ``On conditional {}sibson's
  {\(\alpha\)}-mutual information,'' \emph{CoRR}, vol. abs/2102.00720, 2021.
  [Online]. Available: \url{https://arxiv.org/abs/2102.00720}
\BIBentrySTDinterwordspacing

\bibitem{FehrBerens14}
\BIBentryALTinterwordspacing
S.~Fehr and S.~Berens, ``On the conditional {R\'enyi} entropy,'' \emph{{IEEE}
  Trans. Inf. Theory}, vol.~60, no.~11, pp. 6801--6810, 2014. [Online].
  Available: \url{https://doi.org/10.1109/TIT.2014.2357799}
\BIBentrySTDinterwordspacing

\bibitem{HeuserRioulGuilley14}
\BIBentryALTinterwordspacing
A.~Heuser, O.~Rioul, and S.~Guilley, ``Good is not good enough - {D}eriving
  optimal distinguishers from communication theory,'' in \emph{Cryptographic
  Hardware and Embedded Systems - {CHES} 2014 - 16th International Workshop,
  Busan, South Korea, September 23-26, 2014. Proceedings}, ser. Lecture Notes
  in Computer Science, L.~Batina and M.~Robshaw, Eds., vol. 8731.\hskip 1em
  plus 0.5em minus 0.4em\relax Springer, 2014, pp. 55--74. [Online]. Available:
  \url{https://doi.org/10.1007/978-3-662-44709-3\_4}
\BIBentrySTDinterwordspacing

\bibitem{LapidothPfister16}
\BIBentryALTinterwordspacing
A.~Lapidoth and C.~Pfister, ``Two measures of dependence,'' in \emph{2016 IEEE
  International Conference on the Science of Electrical Engineering (ICSEE)},
  2016, pp. 1--5. [Online]. Available:
  \url{https://ieeexplore.ieee.org/document/7806035}
\BIBentrySTDinterwordspacing

\bibitem{PolyanskiyVerdu10}
Y.~Polyanskiy and S.~Verd\'u, ``Arimoto channel coding converse and {R\'enyi}
  divergence,'' in \emph{2010 48th Annual Allerton Conference on Communication,
  Control, and Computing (Allerton)}, 2010, pp. 1327--1333.

\bibitem{Rioul21}
\BIBentryALTinterwordspacing
O.~Rioul, ``A primer on alpha-information theory with application to leakage in
  secrecy systems,'' in \emph{5th conference on Geometric Science of
  Information (GSI'21), Paris, France, 21-23 July 2021}, ser. Lecture Notes in
  Computer Science, 2021. [Online]. Available:
  \url{https://perso.telecom-paristech.fr/rioul/publis/202102rioul.pdf}
\BIBentrySTDinterwordspacing

\bibitem{Shannon53}
C.~E. Shannon, ``The lattice theory of information,'' \emph{Transactions of the
  IRE Professional Group on Information Theory}, vol.~1, no.~1, pp. 105--107,
  Feb. 1953.

\bibitem{Sibson69}
\BIBentryALTinterwordspacing
R.~Sibson, ``Information radius,'' \emph{Zeitschrift f{\"{u}}r
  Wahrscheinlichkeitstheorie und Verwandte Gebiete}, vol.~14, no.~2, pp.
  149--160, 1969. [Online]. Available: \url{https://doi.org/10.1007/BF00537520}
\BIBentrySTDinterwordspacing

\bibitem{StandaertMalkinYung09}
\BIBentryALTinterwordspacing
F.~Standaert, T.~Malkin, and M.~Yung, ``A unified framework for the analysis of
  side-channel key recovery attacks,'' in \emph{Advances in Cryptology -
  {EUROCRYPT} 2009, 28th Annual International Conference on the Theory and
  Applications of Cryptographic Techniques, Cologne, Germany, April 26-30,
  2009. Proceedings}, ser. Lecture Notes in Computer Science, A.~Joux, Ed.,
  vol. 5479.\hskip 1em plus 0.5em minus 0.4em\relax Springer, 2009, pp.
  443--461. [Online]. Available:
  \url{https://doi.org/10.1007/978-3-642-01001-9\_26}
\BIBentrySTDinterwordspacing

\bibitem{TomamichelHayashi18}
\BIBentryALTinterwordspacing
M.~Tomamichel and M.~Hayashi, ``Operational interpretation of {R\'enyi}
  information measures via composite hypothesis testing against product and
  markov distributions,'' \emph{{IEEE} Trans. Inf. Theory}, vol.~64, no.~2, pp.
  1064--1082, 2018. [Online]. Available:
  \url{https://doi.org/10.1109/TIT.2017.2776900}
\BIBentrySTDinterwordspacing

\bibitem{ErvenHarremoes14}
\BIBentryALTinterwordspacing
T.~van Erven and P.~Harremo{\"{e}}s, ``R\'enyi divergence and
  {Kullback-Leibler} divergence,'' \emph{{IEEE} Trans. Inf. Theory}, vol.~60,
  no.~7, pp. 3797--3820, 2014. [Online]. Available:
  \url{https://doi.org/10.1109/TIT.2014.2320500}
\BIBentrySTDinterwordspacing

\bibitem{Verdu15}
\BIBentryALTinterwordspacing
S.~Verd{\'{u}}, ``{\(\alpha\)}-mutual information,'' in \emph{2015 Information
  Theory and Applications Workshop, {ITA} 2015, San Diego, CA, USA, February
  1-6, 2015}.\hskip 1em plus 0.5em minus 0.4em\relax {IEEE}, 2015, pp. 1--6.
  [Online]. Available: \url{https://doi.org/10.1109/ITA.2015.7308959}
\BIBentrySTDinterwordspacing

\end{thebibliography}
\end{document}